\documentclass[runningheads,12pt]{llncs}
\usepackage{graphicx}
\usepackage{tikz}
\usetikzlibrary{snakes}
\usepackage{xcolor}
\usepackage{hyperref}
\usepackage{xspace}
\usepackage[left=4.32cm,top=4.32cm,right=4.32cm,bottom=4.32cm]{geometry}
\usepackage{amssymb,amsmath,amsfonts,graphicx}
\usepackage{subcaption}

\newcommand\CC{\mathcal{C}}
\newcommand\HH{\mathcal{H}}
\newcommand\Z{\mathbb{Z}}
\newcommand\ILT{\textrm{ILT}}

\usepackage{marginnote}
\usepackage{verbatim}

\usepackage{sectsty}
\allsectionsfont{\sffamily}

\begin{document}

\title{The Iterated Local Directed Transitivity Model for Social Networks\thanks{The first author acknowledges funding from an NSERC Discovery grant, while the third author acknowledges support from an NSERC Postdoctoral Fellowship.}}
\titlerunning{The ILDT Model for Social Networks}

\author{Anthony Bonato\inst{1} \and Daniel W.\ Cranston\inst{2} \and Melissa A.\ Huggan\inst{1} \and Trent Marbach\inst{1} \and Raja Mutharasan\inst{1}}

\authorrunning{A. Bonato, D.W.\ Cranston, M.A.\ Huggan, T. Marbach, R. Mutharasan}

\institute{Ryerson University, Toronto, ON, Canada \\
\email{\href{mailto:abonato@ryerson.ca}{abonato@ryerson.ca},
\href{mailto:melissa.huggan@ryerson.ca}{melissa.huggan@ryerson.ca}, \href{mailto:trent.marbach@ryerson.ca}{trent.marbach@ryerson.ca}, \href{mailto:rmutharasan@ryerson.ca}{rmutharasan@ryerson.ca}}
\and Virginia Commonwealth University, Richmond, VA, USA \\
\email{\href{mailto:dcranston@vcu.edu}{dcranston@vcu.edu}}}

\maketitle

\begin{abstract}

We introduce a new, deterministic directed graph model for social networks, based on the transitivity of triads. In the Iterated Local Directed Transitivity (ILDT) model, new nodes are born over discrete time-steps and inherit the link structure of their parent nodes. The ILDT model may be viewed as a directed graph analog of the ILT model for undirected graphs introduced in \cite{ilt}. We investigate network science and graph-theoretical properties of ILDT digraphs. We prove that the ILDT model exhibits a densification power law, so that the digraphs generated by the models densify over time. The number of directed triads are investigated, and counts are given of the number of directed 3-cycles and transitive $3$-cycles. A higher number of transitive 3-cycles are generated by the ILDT model, as found in real-world, on-line social networks that have orientations on their edges. We discuss the eigenvalues of the adjacency matrices of ILDT digraphs. We finish by showing that in many instances of the chosen initial digraph, the model eventually generates digraphs with Hamiltonian directed cycles.

\end{abstract}

\section{Introduction}

Real-world, complex networks contain numerous mechanisms governing link formation. \emph{Balance theory} (or \emph{structural balance theory}) in social network analysis cites several mechanisms to complete triads (that is, subgraphs consisting of three nodes) in social networks \cite{ek,he}. A central mechanism in balance theory is \emph{transitivity}: if $x$ is a friend of $y,$ and $y$ is a friend of $z,$ then $x$ is a friend of $z$; see, for example, \cite{scott}. Directed networks of ratings or trust scores and models for their propagation were first considered in \cite{guha}. \emph{Status theory} for directed networks, first introduced in \cite{lhk}, was motivated by both trust propagation and balance theory. While balance theory focuses on likes and dislikes, status theory posits that a directed link indicates that the creator of the link views the recipient as having higher status. For example, on Twitter or other social media, a directed link captures one user following another, and the person they follow may be of higher social status. Evidence for status theory was found in directed networks derived from Epinions, Slashdot, and Wikipedia \cite{lhk}. For other applications of status theory and directed triads in social networks, see also \cite{shc,sm}.

The \emph{Iterated Local Transitivity} (\emph{ILT}) model introduced in \cite{ilt1,ilt} and further studied in \cite{ilm,mason}, simulates social networks and other complex networks. Transitivity gives rise to the notion of \emph{cloning}, where a node $x$ is adjacent to all of the neighbors of $y$. Note that in the ILT model, the nodes have local influence within their neighbor sets. Although the model graph evolves over time, there is still a memory of the initial graph hidden in the structure.  The ILT model simulates many properties of social networks. For example, as shown in \cite{ilt}, graphs generated by the model densify over time and exhibit bad spectral expansion. In addition, the ILT model generates graphs with the small-world property, which requires the graphs to have low diameter and dense neighbor sets.

In the present work, we introduce a directed analog of the ILT model, where nodes are added and copy the in- and out-neighbors of existing nodes. The model simulates link creation in social networks, where new actors enter the network, and directed edges are added via transitivity through the lens of status theory. For example, in link formation in a directed social network such as Twitter, a new user may reciprocally follow an existing one, then in turn follow their followers. We consider the simplified setting where new nodes copy all of the links of their parent node. Our model, and iterated models more generally \cite{ilm}, provide counterparts for earlier studied random models for complex networks involving copying \cite {kumar} or duplication \cite{dupl}.

More formally, the \emph{Iterated Local Directed Transitivity} (\emph{ILDT}) model is deterministically defined over discrete time-steps as follows. The only parameter of this deterministic model is the initial digraph $G=G_{0}$. For a non-negative integer $t$, the graph $G_{t}$ represents the digraph at time-step $t$. Suppose that the directed graph $G_{t}$ has been defined for a fixed time $t\geq 0$. To form $G_{t+1}$, for each $x \in V(G_{t}$), add a new node $x'$ called the \emph{clone} of $x$. We refer to $x$ as the \emph{parent} of $x',$ and $x'$ as the \emph{child} of $x.$ Between $x$ and $x'$ we add a bidirectional arc, representing a reciprocal status (or friendship) relationship between them. For arcs $(x,z)$ and $(y,x)$ in $G_{t}$, we add arcs $(x',z)$ and $(y,x'),$ respectively, in $G_{t+1}$. See Figure~1. We refer to $G_t$ as an \emph{ILDT digraph}. Note that the clones form an independent set in $G_{t+1}$.
\begin{figure}[h!]
\begin{center}
\includegraphics[scale=1]{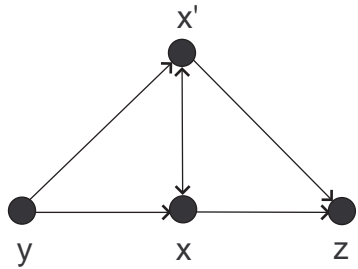}
\caption{Adding a clone $x'$ in ILDT.}\label{animals0}
\end{center}
\end{figure}

In Figure~\ref{ex: ILDT_model}, we illustrate several time-steps of the ILDT model beginning with the directed 3-cycle.
\begin{figure}[h!]
\begin{center}
\includegraphics[scale=0.3]{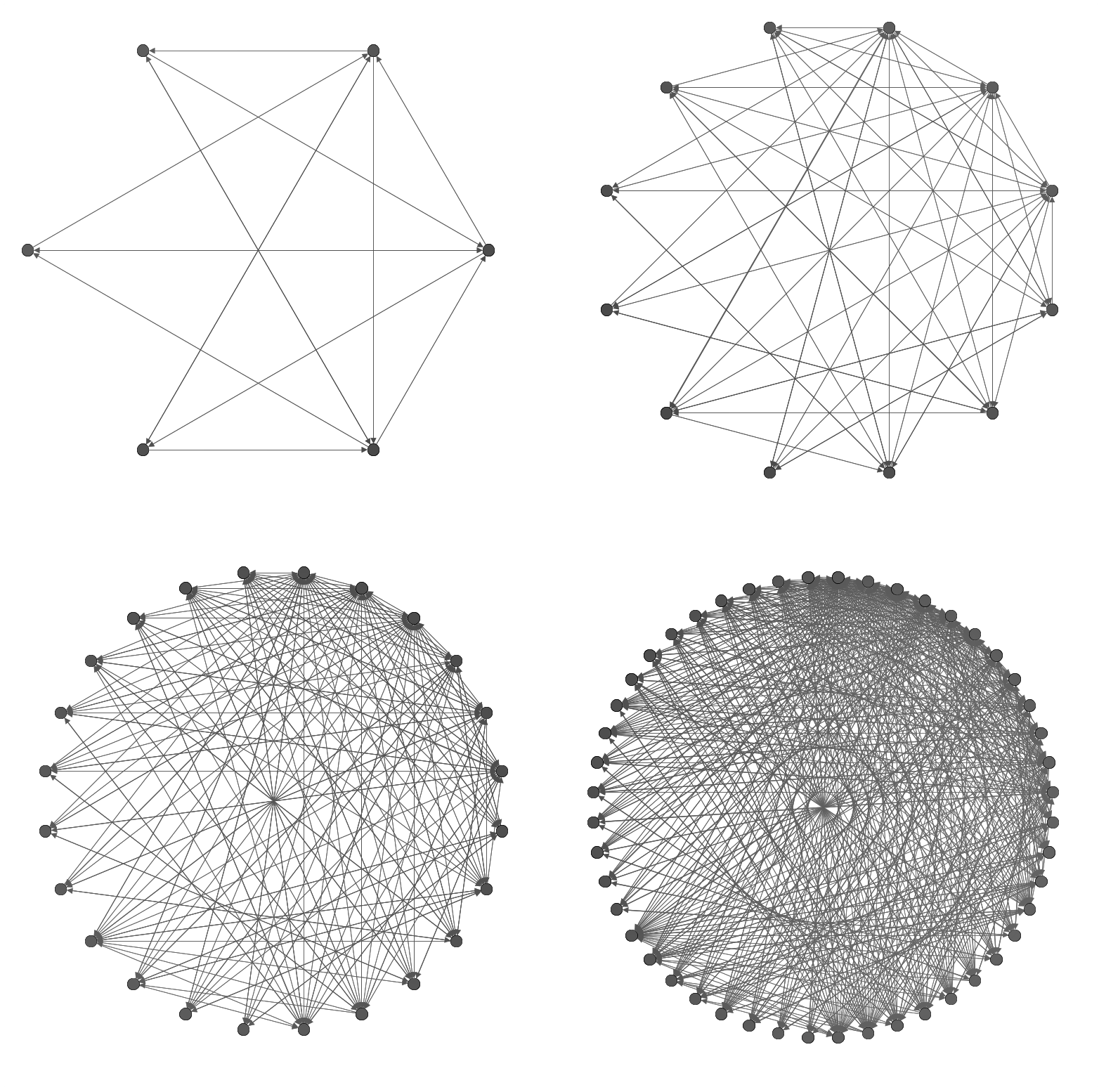}
\end{center}
\caption{The ILDT graphs $G_t,$ with $t=1,2,3,4$ of the ILDT model, where the initial graph is the directed 3-cycle.}\label{ex: ILDT_model}
\end{figure}

The paper is organized as follows. In Section~\ref{sdense}, we prove that the ILDT model exhibits a densification power law, so that the digraphs generated by the model densify over time. The number of directed triads are investigated in Theorem~\ref{count}, and precise counts are given of the number of directed 3-cycles and transitive cycles. These counts are contrasted, and it is shown that the transitive cycles are more abundant (as is the case with social networks; see \cite{lhk}). We include a discussion of the eigenvalues of the adjacency matrices of ILDT directed graphs.
In Section~\ref{seigen}, the eigenvalues of ILDT directed graphs are investigated.
In Section~\ref{scycle}, we explore directed cycles of larger order in ILDT graphs. We show that ILDT digraphs are acyclic if the initial digraph is such, and that for many instances of initial digraphs, the model eventually generates graphs with Hamiltonian directed cycles. We finish with open problems.

For a general reference on graph theory, the reader is directed to \cite{west}. For background on social and complex networks, see \cite{bbook,bt,chung1}. Throughout the paper, we consider finite, directed graphs with bidirectional edges allowed. We refer to a directed edge as an \emph{arc}. For nodes $x$ and $y$ of a graph, if there is an arc between $x$ and $y$, we denote it by $(x,y)$; if there is a \emph{bidirectional arc} between $x$ and $y$ we denote such an arc with usual graph theoretic notation of $xy$. When counting the number of arcs within a graph, bidirectional arcs each contribute $2$ to the final count. We use $\log n$ to be the logarithm of $n$ in base 2.

\section{Densification and triad counts}\label{sdense}

As we referenced in the introduction, social networks \emph{densify}, in the sense that the ratio of their number of arcs to nodes tends to infinity over time \cite{les1}. We show in this section that the ILDT model always generates digraphs that densify, and we give a precise statement below of its densification power law. As a phenomenon specific to digraphs, we consider the differing counts of directed and transitive 3-cycles in graphs generated by the model.

The number of nodes of $G_t$ is denoted $n_t$, the number of arcs is denoted $e_{t}$, and the number of bidirectional arcs is denoted  $b_{t}$. Note that $e_t$ contains two arcs for each bidirectional arc. We establish elementary but important recursive formulas for these parameters.

\begin{lemma}\label{thm: node and edge counts}
Let $G_{0}$ be a digraph with $n_{0}$ nodes, $e_{0}$ arcs, and $b_0$ bidirectional arcs. For all $t\ge 1,$ we have the following:
\begin{enumerate}
\item $n_{t} = 2^{t}n_{0}$;
\item $e_{t} = 3e_{t-1} + 2n_{t-1}$; and
\item $b_{t} = 3b_{t-1} + n_{t-1}.$
\end{enumerate}
\end{lemma}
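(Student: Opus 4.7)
The plan is to prove all three statements by direct counting at time-step $t$, using the construction rules given in the definition of the ILDT model. Each recurrence is a bookkeeping argument; the main subtlety is making sure we do not double-count and that we correctly account for which newly added arcs are in fact bidirectional.

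For part (1), I would observe that each node $x \in V(G_{t-1})$ produces exactly one clone $x'$, and no nodes are deleted. Hence $n_t = 2n_{t-1}$, and a trivial induction gives $n_t = 2^t n_0$.

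For part (2), I would classify the arcs of $G_t$ into four disjoint groups: (a) arcs inherited from $G_{t-1}$, contributing $e_{t-1}$; (b) arcs of the form $(x',z)$ arising from an arc $(x,z) \in G_{t-1}$, contributing $e_{t-1}$; (c) arcs of the form $(y,x')$ arising from an arc $(y,x) \in G_{t-1}$, contributing $e_{t-1}$; and (d) the two arcs of the bidirectional clone arc $xx'$ for each node $x$ of $G_{t-1}$, contributing $2n_{t-1}$. Since clones form an independent set, no arcs between two new clones exist, so these four groups exhaust $E(G_t)$. Summing yields $e_t = 3e_{t-1} + 2n_{t-1}$.

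For part (3), I would again split into three disjoint groups: (a) bidirectional arcs inherited from $G_{t-1}$, contributing $b_{t-1}$; (b) for each bidirectional arc $xy$ of $G_{t-1}$, the cloning rule adds both $(x',y)$ and $(y,x')$, making $x'y$ bidirectional, and symmetrically makes $xy'$ bidirectional, giving $2b_{t-1}$ new bidirectional arcs; (c) the bidirectional clone arcs $xx'$, contributing $n_{t-1}$. The key point here, which is the only subtle part, is verifying that no bidirectional arc can arise from a single (non-bidirectional) arc of $G_{t-1}$: if $(x,y) \in G_{t-1}$ but $(y,x) \notin G_{t-1}$, then only $(x',y)$ is added (not $(y,x')$), so $x'y$ is not bidirectional. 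Thus (a)--(c) exhaust the bidirectional arcs of $G_t$, yielding $b_t = 3b_{t-1} + n_{t-1}$. The main obstacle across the whole lemma is precisely this disjointness check for the bidirectional count; everything else is a straightforward tally.
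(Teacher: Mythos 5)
Your proposal is correct and follows essentially the same argument as the paper: count the $2n_{t-1}$ arcs from the parent--clone bidirectional edges, the three arcs of $G_t$ generated per arc of $G_{t-1}$, and the analogous once-counted tally for bidirectional arcs. Your extra care about disjointness and about why a non-bidirectional arc cannot spawn a new bidirectional one is a slightly more explicit version of the same bookkeeping.
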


\begin{proof}
Item (1) follows immediately as the number of nodes doubles in each time-step. For item (2), for each node in $G_{t-1}$ with $t>0,$ after cloning there will be a bidirectional arc between each parent and their child. These count as $2n_{t-1}$-many arcs. For every arc $(x,y)$ in $G_{t-1}$, arcs $(x,y')$ and $(x',y)$ appear in $G_{t}.$ Hence, for every arc in $G_{t-1}$, three arcs of $G_{t}$ are generated. Summing these two counts gives the desired expression for $e_t.$ Item (3) follows analogously to (2), except that we count the bidirectional edges once; hence, there are $n_{t-1}$-many bidirectional arcs.\qed 
\end{proof}

We now state the densification power law for ILDT graphs. For positive integer-valued functions $f_t$ and $g_t$, we use the expression $f_t \sim g_t$ if ${f_t}/{g_t}$ tends to $1$ as $t$ tends to $\infty.$

\begin{corollary}\label{cc}
In the ILDT model, we have that
$$
\frac{e_{t}}{n_{t}} \sim \left(\frac{3}{2}\right)^{t} \frac{(e_0+2n_0)}{n_0}.
$$
In particular, we have that $e_t \sim C \cdot (n_t)^a,$ where $a = \log 3$ and $C=\frac{e_0+2n_0}{(n_0)^a}$.
\end{corollary}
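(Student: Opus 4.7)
The plan is to solve the recurrence from Lemma~\ref{thm: node and edge counts} in closed form and then extract the asymptotics. Since item~(1) gives $n_{t-1} = 2^{t-1}n_0$, I would substitute this into item~(2) to obtain the first-order linear recurrence
\[
e_t = 3 e_{t-1} + 2^t n_0,
\]
which is a standard inhomogeneous recurrence with constant coefficients.

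Next, I would solve it by the usual ansatz (or by iteration). The homogeneous part gives $A\cdot 3^t$, and a particular solution of the form $B\cdot 2^t$ yields $B = -2n_0$. Matching the initial condition $e_0$ pins down $A = e_0 + 2n_0$, so
\[
e_t \;=\; (e_0 + 2n_0)\,3^t \;-\; 2n_0\cdot 2^t.
\]
Dividing by $n_t = 2^t n_0$ and separating the two terms gives
\[
\frac{e_t}{n_t} \;=\; \left(\frac{3}{2}\right)^t\frac{e_0+2n_0}{n_0} \;-\; 2,
\]
and since the leading term grows like $(3/2)^t$ while the correction $-2$ is bounded, the ratio of the two sides tends to $1$ as $t\to\infty$, which is precisely the $\sim$ statement.

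For the power-law form, I would use the identity $n_t^{\log 3} = (2^t n_0)^{\log 3} = 3^t \, n_0^{\log 3}$, so that with $C = (e_0+2n_0)/n_0^{\log 3}$ we get $C\cdot n_t^{\log 3} = (e_0+2n_0)\,3^t$, which matches the leading term of the closed form for $e_t$. The subdominant $2n_0\cdot 2^t$ term is of smaller order than $3^t$ and therefore negligible in the ratio, yielding $e_t \sim C\cdot n_t^a$ with $a = \log 3$.

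There is no real obstacle here: the argument is a textbook linear recurrence followed by a one-line asymptotic comparison. The only thing to be mindful of is stating explicitly that the error term $-2n_0\cdot 2^t$ is dominated by $3^t$ so that both the $(3/2)^t$ statement and the power-law rephrasing follow from the same closed form; everything else is bookkeeping.
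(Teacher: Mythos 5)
Your proposal is correct and follows essentially the same route as the paper: both derive the closed form $e_t = 3^t(e_0+2n_0) - 2^{t+1}n_0$ from Lemma~\ref{thm: node and edge counts} (the paper by iterating and summing a geometric series, you by the homogeneous-plus-particular-solution ansatz, which is the same computation) and then divide by $n_t = 2^t n_0$ to read off both asymptotic statements. Your explicit verification of the power-law form via $n_t^{\log 3} = 3^t n_0^{\log 3}$ is a small bonus the paper leaves as "the result follows."
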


\begin{proof}
By Lemma~\ref{thm: node and edge counts}, we have that
\begin{eqnarray*}
e_{t} &=&3^{t}e_{0} + 3^{t-1} 2^{1}n_{0} +  3^{t-2} 2^{2}n_{0}+\ldots+3^12^{t-1}n_{0}+ 2^{t}n_{0}\\
&=& 3^{t}e_{0} + 3^{t-1} 2n_{0}\left(\frac{1-\left(\frac{2}{3}\right)^{t}}{1-\frac{2}{3}}\right)\\
&= &3^{t}\left(e_{0} + 2  n_{0}\right) -  2^{t}(2 n_{0}).\\
\end{eqnarray*}

We then derive that
$$
\frac{e_{t}}{n_{t}} = \frac{3^{t}\left(e_{0} + 2 \cdot n_{0}\right) -  2^{t}(2 n_{0})}{2^{t}n_{0}} \sim \left(\frac{3}{2}\right)^{t}\frac{(e_0+2n_0)}{n_0},
$$
and the result follows.\qed
\end{proof}

We next consider $3$-node subgraph counts in the ILDT model, where we find a higher number of transitive $3$-cycles relative to directed 3-cycles.
A similar phenomenon was found in directed network samples in social media such as Epinions, Slashdot, and Wikipedia, where transitive 3-cycles appear much more commonly than directed 3-cycles; see \cite{lhk}.
If the initial graph has no directed 3-cycles, then our results show there are none at any time-step in the evolution of the model.
For nodes $x$, $y$, $z$, if there exists a cycle with arcs $(x, y)$, $(y, z)$, and $(z, x),$ then we abbreviate this directed $3$-cycle to $(x, y, z)$. We take the directed $3$-cycles $(x,y,z)$ and $(y,z,x)$ to be the same cycle. For a $3$-cycle where the arcs $(x, y)$, $(y, z)$, and $(x, z)$ are present, we denote this transitive $3$-cycle by $xyz$. A \emph{bidirectional $3$-cycle} is one that consists of three bidirectional arcs.
Although, strictly speaking, a bidirectional $3$-cycle contains six transitive $3$-cycles and two directed $3$-cycles, 
 we will distinguish these by asserting that each transitive and directed $3$-cycle must contain at least one non-bidirectional arc.

\begin{theorem}\label{thm: directed and transitive cycle count}
In the graph $G_t$, let $D_{t}$ be the number of directed $3$-cycles, $T_{t}$ be the number of transitive $3$-cycles, and $B_t$ be the number of bidirectional $3$-cycles. We then have that

\begin{enumerate}
\item $D_{t+1} = 4D_{t}$;
\item $T_{t+1} = 4T_{t} + 4(e_{t}-2b_t)$; and
\item $B_{t+1} = 4B_t + 2b_t.$
\end{enumerate}
\end{theorem}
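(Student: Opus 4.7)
The plan is to partition the $3$-cycles of $G_{t+1}$ by how many of their three vertices are clones. Because the clones form an independent set in $G_{t+1}$, no $3$-cycle of any type can contain two or more clones, since that would require an arc between two clones. Thus the only contributions come from $3$-cycles using zero clones or exactly one, and the zero-clone cycles are precisely the $3$-cycles of $G_t$, contributing $D_t$, $T_t$, and $B_t$, respectively.

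To handle the one-clone case, I fix a clone $x'$ with parent $x$ and recall that $x'$ is incident with the bidirectional arc $xx'$, the arcs $(x',z)$ for each $(x,z)\in G_t$, and the arcs $(y,x')$ for each $(y,x)\in G_t$. I split further by whether $x$ itself appears on the cycle. When $x$ does not appear, the other two vertices $u,v$ are originals with $u,v\ne x$, and the arcs between $x'$ and $\{u,v\}$ in $G_{t+1}$ exactly mirror those between $x$ and $\{u,v\}$ in $G_t$. Hence a $3$-cycle on $\{x',u,v\}$ of any given type corresponds to a $3$-cycle on $\{x,u,v\}$ of the same type in $G_t$, and its status as bidirectional is preserved. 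Summing over $x$, each $3$-cycle of $G_t$ is used three times (one per choice of cloned vertex), contributing $3D_t$, $3T_t$, and $3B_t$.

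The remaining one-clone cycles lie on $\{x',x,y\}$ for some $y\ne x$ and contain the bidirectional arc $xx'$; the other two arcs lie between $\{x',x\}$ and $y$, and are controlled by whether $(x,y)$ and $(y,x)$ lie in $G_t$, since $(x',y)\in G_{t+1}$ iff $(x,y)\in G_t$ and $(y,x')\in G_{t+1}$ iff $(y,x)\in G_t$. A short case analysis over the orderings of $\{x',x,y\}$ shows: for directed $3$-cycles, both cyclic orderings force $xy$ to be bidirectional in $G_t$, which makes the entire cycle bidirectional and contributes $0$ to $D_{t+1}$; for bidirectional $3$-cycles, $xy$ bidirectional in $G_t$ suffices, yielding $2b_t$ (one per ordered endpoint of a bidirectional arc of $G_t$); for transitive $3$-cycles, four of the six (source, middle, sink) orderings require exactly one of $(x,y),(y,x)$ to lie in $G_t$, each summing over $x,y$ to the total $e_t-2b_t$ of non-bidirectional arcs, while the remaining two orderings force the cycle to be bidirectional.

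Collecting terms gives $D_{t+1}=D_t+3D_t=4D_t$, $T_{t+1}=T_t+3T_t+4(e_t-2b_t)$, and $B_{t+1}=B_t+3B_t+2b_t$, as desired. The main obstacle is the bookkeeping in the one-clone sub-case when $x$ also lies on the cycle: for each of the three cycle types, one must determine which orderings of the vertices $\{x',x,y\}$ along the cycle produce a bidirectional $3$-cycle (which belongs only to $B_{t+1}$) and which produce a cycle with at least one non-bidirectional arc.
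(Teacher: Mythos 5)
Your proof is correct and follows essentially the same route as the paper: new $3$-cycles arise by substituting a clone for a parent, the extra transitive and bidirectional triangles come from arcs of $G_t$ together with a parent--clone pair, and the independence of the clone set rules out cycles through two clones, with the parent-on-the-cycle case collapsing to a bidirectional triangle exactly as in the paper's argument for $D_{t+1}$. The only difference is presentational: your explicit partition by the number of clones, and by whether the parent lies on the cycle, spells out the exhaustiveness check that the paper leaves as ``straightforward to check.''
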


\begin{proof}
For item (1), consider a directed $3$-cycle in a graph $G_{t}$, labeled $(a,b,c)$. Each node can be replaced by its clone to produce a new directed $3$-cycle; hence, $(a', b, c)$, $(a,b',c)$, and $(a,b,c')$ are all directed $3$-cycles generated by the directed  $3$-cycle $(a,b,c)$ from $G_{t}$. Thus, for each $3$-cycle in $G_{t}$, there are four directed $3$-cycles in $G_{t+1}$. Therefore, $D_{t+1} \ge 4D_{t}$.

To establish the upper bound, suppose, by way of contradiction, that there exists another directed $3$-cycle in $G_{t+1}$ which was not previously counted. Such a directed $3$-cycle cannot involve only parent nodes (since it would be from $G_{t}$) and it also cannot involve two clones because clones form an independent set. Hence, it must involve two nodes from $G_{t}$ and one of the clones, call it $(a,d',c)$. But since $d'$ has all the same adjacencies as $d$, this implies that $(a,d,c)$ is a directed $3$-cycle which is in $G_{t}$, a contradiction, or that $a,d,c$ are not all distinct. In the later case, we can assume that $a=d$, and so the directed $3$-cycle includes arcs $(a,c)$ and $(c,a')$, implying that $3$-cycle is bidirectional, which is not counted in $D_{t+1}$. Hence, $D_{t+1} = 4D_t.$

To prove item (2), observe that for every non-bidirectional arc $(x,y)$ in $G_{t}$, there will be four transitive $3$-cycles in $G_{t+1}$ formed with their clones using bidirectional arcs: $xx'y$, $x'xy$, $xyy'$, and $xy'y$. Thus, we add $4(e_{t}-2b_t)$ to the count of $T_{t+1}$ (note that $e_{t}-2b_t$ counts the number of non-bidirectional arcs)

Existing transitive $3$-cycles from $G_{t}$ also exist in $G_{t+1},$ and substituting a clone for its parent will produce a new transitive $3$-cycle. Thus, each original transitive $3$-cycle in $G_{t}$ gives four transitive $3$-cycles in $G_{t+1}$ (namely, the cycle itself, and three produced from clone substitution). It is straightforward to check that these are the only transitive 3-cycles in $G_{t+1}.$ Therefore, $T_{t} + 3T_{t}$ contributes to the total count of transitive $3$-cycles. We then have that $T_{t+1} = 4T_{t} + 4(e_{t}-2b_t).$

Item (3) can be shown similarly, as each bidirectional $3$-cycle of $G_t$ will correspond to four bidirectional $3$-cycles in $G_{t+1}$. Each bidirectional arc $xy$ in $G_{t}$ will correspond to two unique bidirectional $3$-cycles in $G_{t+1}$ formed from the nodes $x,x',y,y'$. \qed
\end{proof}

We now present an exact expressions for $D_{t},$ $T_{t},$ and $B_t.$

\begin{theorem}\label{count}
In the ILDT digraph $G_t$, we have that
\begin{enumerate}
\item $D_{t} = 4^{t}D_{0}$;
\item $T_{t} = 4^t T_0 + 4 (4^{t}-3^t) (e_0 - 2b_0)$; and
\item $B_t = 4^t B_0 + 2b_0 (4^t - 3^t) + n_0 (4^t - 2\cdot3^t+2^t).$
\end{enumerate}
\end{theorem}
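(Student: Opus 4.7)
The plan is to treat the three identities as solutions to the first-order linear recurrences already supplied by Theorem~\ref{thm: directed and transitive cycle count}, using Lemma~\ref{thm: node and edge counts} (and a companion closed form for $b_t$) to evaluate the inhomogeneous terms. Item~(1) is a one-line induction: the recurrence $D_{t+1}=4D_t$ immediately gives $D_t = 4^tD_0$.

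For items (2) and (3) I first need closed forms for $e_t$ and $b_t$. The formula $e_t = 3^t(e_0+2n_0) - 2^{t+1}n_0$ is already established in the proof of Corollary~\ref{cc}. By the same geometric-series argument applied to $b_{t}=3b_{t-1}+n_{t-1}=3b_{t-1}+2^{t-1}n_0$, I obtain the companion identity
\begin{equation*}
b_t = 3^t(b_0+n_0) - 2^tn_0.
\end{equation*}
Subtracting gives the key simplification
\begin{equation*}
e_t - 2b_t = 3^t(e_0-2b_0),
\end{equation*}
which is exactly the inhomogeneous term appearing in the recurrence for $T_t$.

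For item (2), I plug this into $T_{t+1}=4T_t+4(e_t-2b_t)=4T_t+4\cdot 3^t(e_0-2b_0)$. Unrolling the recurrence, or equivalently checking the proposed formula by induction, gives
\begin{equation*}
T_t = 4^tT_0 + 4(e_0-2b_0)\sum_{k=0}^{t-1}4^{t-1-k}3^k = 4^tT_0 + 4(4^t-3^t)(e_0-2b_0),
\end{equation*}
where the geometric sum evaluates via $\sum_{k=0}^{t-1}(3/4)^k = 4(1-(3/4)^t)$. For item (3), I substitute the closed form for $b_t$ into $B_{t+1}=4B_t+2b_t$ to get $B_{t+1}=4B_t+2\cdot 3^t(b_0+n_0)-2^{t+1}n_0$, then unroll. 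This produces two geometric sums, one with ratio $3/4$ and one with ratio $1/2$, which evaluate to $2(b_0+n_0)(4^t-3^t)$ and $n_0(4^t-2^t)$ respectively. Combining,
\begin{equation*}
B_t = 4^tB_0 + 2(b_0+n_0)(4^t-3^t) - n_0(4^t-2^t),
\end{equation*}
and regrouping the $n_0$ terms as $n_0(4^t-2\cdot 3^t+2^t)$ yields the stated expression.

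The main obstacle is essentially bookkeeping: the formula for $B_t$ involves combining a $4^t$-homogeneous part with two distinct decaying modes ($3^t$ and $2^t$), so the geometric-sum evaluations must be done carefully to recover the clean form $n_0(4^t-2\cdot 3^t+2^t)$. An alternative, and perhaps cleaner, presentation is to simply verify each claimed identity by induction on $t$: the base case $t=0$ is immediate, and the inductive step uses the recurrences of Theorem~\ref{thm: directed and transitive cycle count} together with the closed forms for $e_t$ and $b_t$ derived above. This avoids explicit summation and reduces the proof to elementary algebraic manipulation.
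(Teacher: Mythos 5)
Your proposal is correct and follows essentially the same route as the paper: both unroll the recurrences of Theorem~\ref{thm: directed and transitive cycle count}, use the closed forms $e_t = 3^t(e_0+2n_0)-2^{t+1}n_0$ and $b_t = 3^t(b_0+n_0)-2^t n_0$ to get the key identity $e_t-2b_t = 3^t(e_0-2b_0)$, and then sum the resulting geometric series. Your algebra for all three items checks out, so there is nothing substantive to add.
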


\begin{proof}
Item (1) follows from Theorem~\ref{thm: directed and transitive cycle count} (1) by induction. For (2), by Theorem~\ref{thm: directed and transitive cycle count} (2), we derive that
\begin{align*}
T_{t} &= 4^{t}T_{0}+ \sum_{i=1}^{t} 4^i (e_{t-i} - 2b_{t-i}).
\end{align*}
By the proof of Corollary~\ref{cc}, along with a similar argument for $b_t$, we have that
\begin{align}
e_{t-i} &= 3^{t-i}(e_{0} + 2n_{0}) - 2^{t-i+1}n_{0}, \nonumber \\
b_{t-i} &= 3^{t-i}(b_{0} + n_{0}) - 2^{t-i}n_{0}, \label{expression for edges2} \text{ and so} \\
e_{t-i} -2b_{t-i} &= 3^{t-i}(e_{0} -2b_0). \label{expression for edges3}
\end{align}
From \eqref{expression for edges3}, we derive that
$$
T_{t} = 4^{t} T_{0}+ \sum_{i=1}^{t} 4^i (3^{t-i} (e_0 - 2b_0))
$$
and item (2) follows by summing the geometric series.

For item (3), we have that $B_{t} = 4^{t}B_{0}+ \frac{1}{2} \sum_{i=1}^{t} 4^i b_{t-i}.$ From equation \eqref{expression for edges2}, we find (in a way analogous to $T_t$) the desired expression for $B_t.$  \qed \end{proof}

We consider next the ratio of $D_t$ and $T_t,$ which gives more explicit estimates on the relative abundance of transitive versus directed 3-cycles. By Theorem~\ref{count}, we have that
$$
\frac{D_{t}}{T_{t}} =\frac{4^{t}D_{0}}{ 4^t T_0 + 4 (4^{t}-3^t) (e_0 - 2b_0)} \sim  \frac{D_{0}}{T_{0} +  4 (e_0 - 2b_0)}.
$$
Hence, the ratio $\frac{D_{t}}{T_{t}}$ may be made as small as we like by choosing an appropriate initial digraph.

\section{Eigenvalues}\label{seigen}

We next consider eigenvalues of the adjacency matrices of ILDT digraphs. Spectral graph theory is a well-developed area for undirected graphs (see \cite{chung}) but less so for directed graphs (where the eigenvalues may be complex numbers with non-zero imaginary parts). We observe that if $G_t$ has adjacency matrix $A$, then $G_{t+1}$ has the following adjacency matrix:
\[
\left(\begin{matrix}
A & A+I \\
A+I & 0
\end{matrix}\right),
\]
where $I$ and $0$ are the appropriately sized identity and zero matrices, respectively. The following recursive formula (analogous to the one in the ILT model) allows us to determine all the eigenvalues of ILDT graphs from the spectrum of the initial graph. We have the following theorem from \cite{ilt}.

\begin{theorem}\label{ilt} \label{thm:NewEigenvalues}
Let $t \ge 0.$ If $\rho$ is an eigenvalue of the adjacency matrix of $G_t$, then the eigenvalues of the adjacency matrix of $G_{t+1}$ are
\[
\frac{\rho \pm \sqrt{\rho^2 +4(\rho+1)^2}}{2}.
\]
\end{theorem}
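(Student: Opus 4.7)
The plan is to compute the characteristic polynomial of the block adjacency matrix of $G_{t+1}$ directly, exploiting the fact that all four blocks are polynomials in $A$ (indeed, the off-diagonal block $A+I$ and the lower-right block $0$ are scalar or linear in $A$), so that all four blocks pairwise commute. Writing $M$ for the adjacency matrix of $G_{t+1}$, we have
\[
\lambda I - M = \begin{pmatrix} \lambda I - A & -(A+I) \\ -(A+I) & \lambda I \end{pmatrix}.
\]
Because the two blocks in the bottom row commute, the standard block-determinant identity $\det\begin{pmatrix}P & Q \\ R & S\end{pmatrix}=\det(PS-QR)$ (valid when $R$ and $S$ commute) yields
\[
\det(\lambda I - M) = \det\bigl((\lambda I - A)(\lambda I) - (A+I)^{2}\bigr) = \det\bigl(\lambda^{2}I - \lambda A - (A+I)^{2}\bigr).
\]
The right-hand side is a polynomial in $A$, so if $\rho_{1},\dots,\rho_{n_{t}}$ are the eigenvalues of $A$ counted with algebraic multiplicity, its determinant factors as $\prod_{i}\bigl(\lambda^{2}-\rho_{i}\lambda-(\rho_{i}+1)^{2}\bigr)$. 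Solving the quadratic $\lambda^{2}-\rho\lambda-(\rho+1)^{2}=0$ for each $\rho$ produces exactly the two values $\tfrac{\rho\pm\sqrt{\rho^{2}+4(\rho+1)^{2}}}{2}$ and accounts for all $2n_{t}$ eigenvalues of $M$.

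As a sanity check on the roots themselves, I would also verify the formula at the eigenvector level in the diagonalizable case: given $Av=\rho v$, one looks for an eigenvector of $M$ of the form $(v,\alpha v)^{\top}$, and the two block rows of $M(v,\alpha v)^{\top}=\lambda(v,\alpha v)^{\top}$ reduce respectively to $\rho+\alpha(\rho+1)=\lambda$ and $(\rho+1)=\alpha\lambda$; eliminating $\alpha$ recovers the same quadratic. This provides a reassuring cross-check and, when it applies, an explicit description of the new eigenvectors.

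The main subtlety to watch is that $A$ is the adjacency matrix of a directed graph, so it need not be symmetric or diagonalizable and its eigenvalues may be complex. That is precisely why I would present the determinant computation rather than the eigenvector substitution as the primary argument: it does not rely on a basis of eigenvectors, and the commutativity required by the block-determinant identity is immediate since one of the relevant blocks is a scalar matrix. Once that step is justified, the rest is routine polynomial manipulation and the quadratic formula.
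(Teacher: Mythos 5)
Your proposal is correct, and it is worth noting that the paper itself supplies no proof of this statement: it simply quotes the theorem from the ILT paper \cite{ilt}, after observing that the adjacency matrix of $G_{t+1}$ has the block form $\left(\begin{smallmatrix} A & A+I \\ A+I & 0 \end{smallmatrix}\right)$. Your determinant computation is a complete, self-contained argument: the identity $\det\left(\begin{smallmatrix} P & Q \\ R & S \end{smallmatrix}\right)=\det(PS-QR)$ does hold when the bottom-row blocks commute, which is immediate here since $S=\lambda I$ is scalar, and the factorization $\det\bigl(\lambda^{2}I-\lambda A-(A+I)^{2}\bigr)=\prod_{i}\bigl(\lambda^{2}-\rho_{i}\lambda-(\rho_{i}+1)^{2}\bigr)$ follows by triangularizing $A$ over $\mathbb{C}$, so no diagonalizability is needed. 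Your attention to that last point is the real added value: in the undirected ILT setting of \cite{ilt} the matrix $A$ is symmetric, so an eigenvector-substitution argument (essentially your ``sanity check'') suffices, whereas here $A$ is the adjacency matrix of a digraph and need not be diagonalizable, so the citation in the present paper implicitly relies on exactly the kind of multiplicity-respecting characteristic-polynomial argument you spell out. In short, you prove (with multiplicities) what the paper only asserts by reference, and you do so by the argument that actually covers the directed case.
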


It is of interest to consider properties of the distribution of eigenvalues arising from ILDT digraphs graphs in the complex plane. For this, we consider the special case of $G_0$ a directed $3$-cycle, which has eigenvalues the 3rd roots of unity. The resulting eigenvalue distribution of these ILDT digraphs suggests a rich structure. We plot the eigenvalues corresponding to $G_t$ for $1 \leq t \leq 5$ in Figure~\ref{fig:unNorm}.

\begin{figure} \label{fig:complexWholeSmall} \centering
\subcaptionbox{ \label{fig:unNorm}  Standard.}{\includegraphics[scale=0.35]{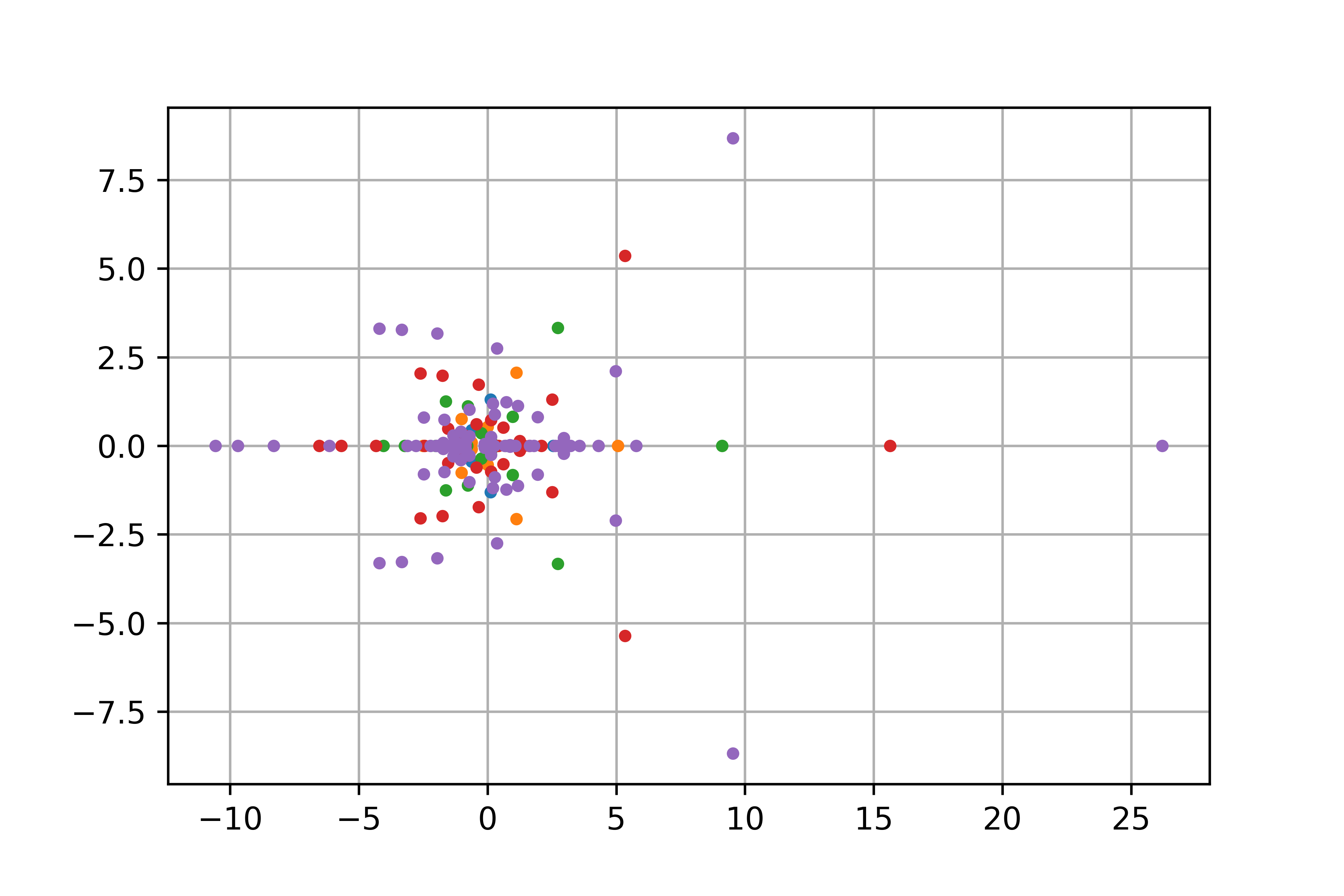}}
\subcaptionbox{ \label{fig:Norm} Normalized.} {\includegraphics[scale=0.35]{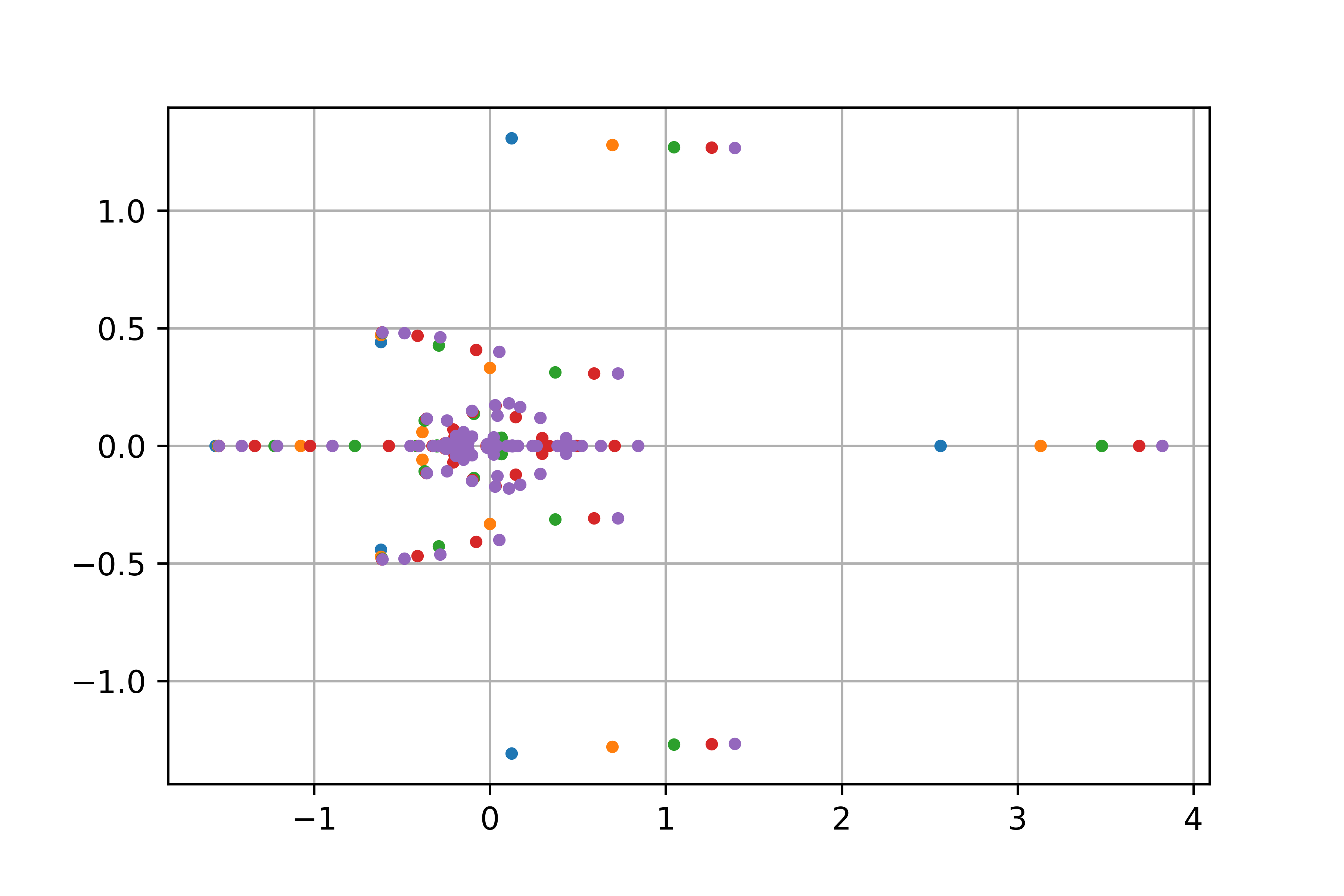}}
\subcaptionbox{ \label{fig:NormCurve} Normalized, with curve $C_t$.} {\includegraphics[scale=0.30]{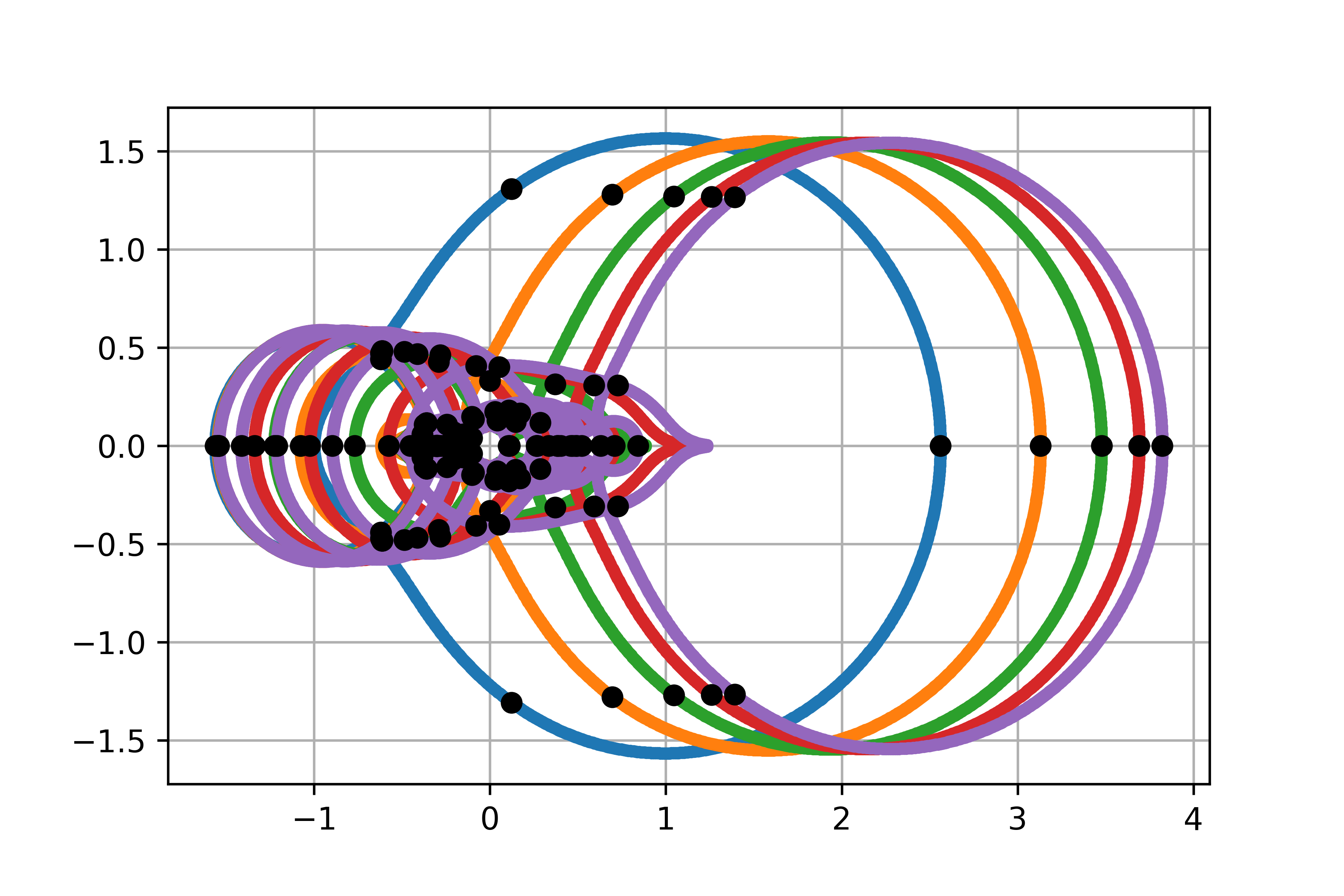}}
\caption{Eigenvalues in the complex plane of ILDT digraphs $G_t$, where $1 \leq t \leq 5$. Colors distinguish the time-steps. Figures (a) depicts the eigenvalues, (b) the normalized eigenvalues, and (c) depicts the curves $C_t.$}\label{animals} \end{figure}

If $\rho$ is an eigenvalue of the adjacency matrix of $G_t$ and $\rho$ is large in magnitude, then there is an eigenvalue of the adjacency matrix of $G_{t+1}$ that is approximately equal to $((1+\sqrt{5})/2) \rho$, and in a similar way there is an eigenvalue of the adjacency matrix of $G_{t+\alpha}$ that is approximately equal to $((1+\sqrt{5})/2)^\alpha \rho$. We \emph{normalize} the eigenvalues corresponding to $G_t$  by dividing them by $((1+\sqrt{5})/2)^t$ for $1 \leq t \leq 5$ in Figure~\ref{fig:Norm}.

Let $C_0$ be the circle in the complex plane of radius 1 centered at the origin. By applying the function $f(z) = (z \pm \sqrt{z^2 +4(z+1)^2})/2$ iteratively $t$ times to the points of $C_0$, we obtain a curve $C_t$ in the complex plane. If we let $G_0$ be the directed $n$-cycle, then the eigenvalues of $G_t$ lie on $C_t$. In Figure~\ref{fig:NormCurve}, we include $C_t$ and the eigenvalues of $G_t$ after normalization by dividing them by $((1+\sqrt{5})/2)^t$, where $G_0$ is the directed $3$-cycle. The curve $C_t$ was plotted after normalization for $t \leq 30$, and there was no noticeable difference between the time-steps $t=15$ and $t=30$. The structure after 30 iterations is provided in Figure~\ref{fig:30it}. We suspect that as $t$ approaches infinity, the normalization of $C_t$ approaches a specific, limiting curve. As a result, the normalized mapping applied to the $n$th roots of unity would approach limiting points, and these limiting points can be calculated from the curve.

\begin{figure}[h!]
\centering
\includegraphics[scale=0.35]{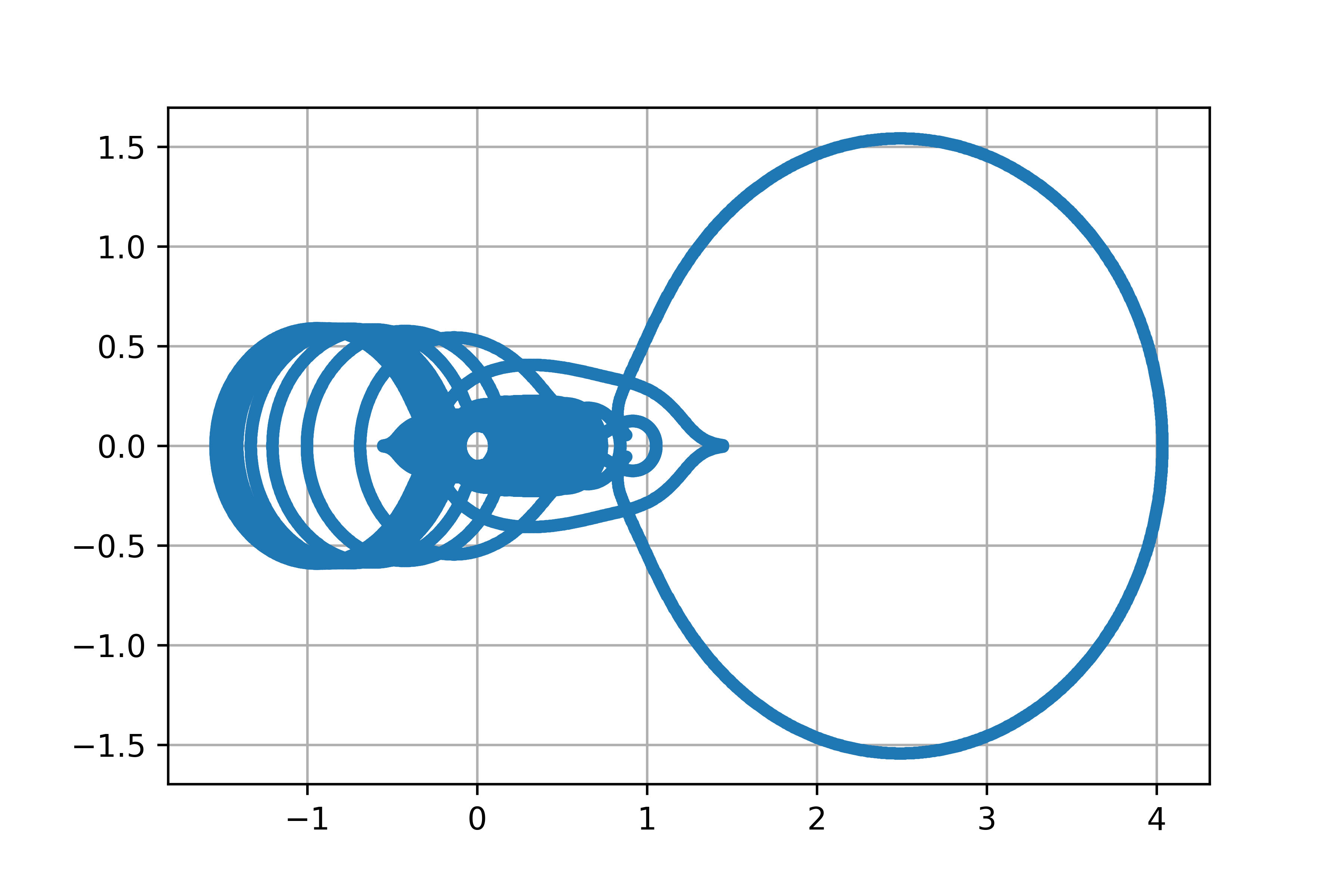}
\caption{The mapping $f(z)$ applied to the complex unit circle over 30 iterations, after normalization.}\label{fig:30it}
\end{figure}

\section{Directed cycles}\label{scycle}

As a consequence of Theorem~\ref{count}, there are no directed 3-cycles in an
ILDT digraph unless there is one present in the initial graph. We generalize
this property in the following result. Note that our directed cycles in the
theorem are oriented, and so do not include directed 2-cycles.  However,
we are allowed to include bidirected edges (traversed in a single-direction) as
part of our directed cycles.
\begin{theorem}
For all $t\ge 0,$ the digraph $G_t$ contains an oriented directed cycle if and only if $G_{t+1}$ contains an oriented directed cycle.
\end{theorem}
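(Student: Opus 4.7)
The forward direction is immediate because $G_t$ sits inside $G_{t+1}$ as an induced subdigraph on the old vertices, and the cloning step introduces no new arcs between two old vertices. So any oriented directed cycle of $G_t$, together with whichever non-bidirectional arcs witness its being ``oriented,'' persists unchanged in $G_{t+1}$.

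For the reverse direction, my plan is to pull back an oriented directed cycle $C = v_1 v_2 \cdots v_k v_1$ in $G_{t+1}$ along the \emph{parent map} $\phi \colon V(G_{t+1}) \to V(G_t)$ that fixes each vertex of $G_t$ and sends each clone $x'$ to its parent $x$. Applying $\phi$ pointwise to $C$ gives a sequence in $V(G_t)$; consecutive repetitions in this sequence correspond exactly to the \emph{parent--child arcs} on $C$ (those of the form $(x,x')$ or $(x',x)$), because clones form an independent set in $G_{t+1}$. Deleting those repetitions produces a closed walk $W$ in $G_t$. Each non-parent--child arc $(v_i, v_{i+1})$ of $C$ yields an honest arc $(\phi(v_i), \phi(v_{i+1}))$ of $G_t$ by the cloning rule itself: for instance, if $v_{i+1} = u'$ and $v_i \neq u$, the arc $(v_i, u') \in G_{t+1}$ forces $(v_i, u) \in G_t$.

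The central step is to show that $W$ inherits a non-bidirectional arc from $C$. I plan to prove the sub-claim: if $(v_i, v_{i+1})$ is non-bidirectional in $G_{t+1}$, then $(\phi(v_i), \phi(v_{i+1}))$ is non-bidirectional in $G_t$. The cleanest way in is the contrapositive --- if the image were bidirectional in $G_t$, applying the cloning rule forwards would place the reverse arc $(v_{i+1}, v_i)$ in $G_{t+1}$, a contradiction. A small case split on whether $v_i$ and $v_{i+1}$ are clones handles every case uniformly. Since parent--child arcs are, by construction, bidirectional, any non-bidirectional arc of $C$ is automatically non-parent--child and thus survives the trimming step as an arc of $W$.

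Finally, I would invoke the standard observation that a closed walk in a digraph that uses a specified arc $(x, y)$ contains a simple directed cycle through $(x, y)$: trace the walk from $y$ forward until it first re-enters $x$, then peel off any internal repetitions. Applied to the non-bidirectional image arc in $W$, this produces a simple cycle $C'$ of $G_t$ through a non-bidirectional arc. Because the reverse arc is absent from $G_t$, $C'$ cannot be a directed $2$-cycle, so $|C'| \geq 3$, and $C'$ is the desired oriented directed cycle of $G_t$. The main (and only serious) technical obstacle is the case analysis in the pull-back sub-claim; the remaining ingredients are routine bookkeeping and a standard cycle-extraction argument.
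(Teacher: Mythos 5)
Your proposal is correct, and it follows the same basic strategy as the paper---project the cycle of $G_{t+1}$ to a closed walk in $G_t$ via the clone-to-parent map and then extract a directed cycle---but it is more careful at exactly the two points where the paper is terse. First, you explicitly treat parent--child arcs, whose two endpoints collapse to a single vertex under the projection; the paper instead asserts that every arc $(a_i,a_{i+1})$ of the cycle maps to an arc of $G_t$, which is literally false when $a_{i+1}$ is the clone of $a_i$, and it absorbs these degenerate steps only implicitly in its walk-decomposition argument. Second, and more importantly, your sub-claim that a non-bidirectional arc of $G_{t+1}$ pulls back to a non-bidirectional arc of $G_t$, combined with extracting the cycle through that tracked arc, is precisely what is needed to guarantee that the cycle recovered in $G_t$ is not a directed $2$-cycle; the paper handles this with the parenthetical that none of the cycles in its edge-disjoint decomposition is a $2$-cycle ``by hypothesis,'' which is not justified as stated (pieces of the projected walk can perfectly well be $2$-cycles; what matters is that the piece containing the tracked non-bidirectional arc is not one). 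Note also that your reading of ``oriented''---the cycle must carry at least one non-bidirectional arc---is the reading under which the theorem is true and matches the paper's earlier convention for directed $3$-cycles; under the more literal reading in which a triangle of bidirectional arcs traversed in one direction counts, the statement fails already for $G_0$ a single bidirectional edge, since $G_1$ then contains the directed $3$-cycle $x\to y\to x'\to x$ while $G_0$ has no cycle of length at least $3$. In short, your route (tracked arc plus standard cycle extraction) buys a complete, checkable argument at the cost of a short case analysis, whereas the paper's degree-balance decomposition is slicker but leaves the crucial non-$2$-cycle step unsupported.
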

\begin{proof}
The forward implication is immediate, so we focus on the reverse implication. Let $(a_1, \ldots, a_k)$ be a directed cycle of length $k$ in $G_{t+1}$.
Define a function $f:V(G_{t+1}) \rightarrow V(G_{t})$ that maps a clone in $G_{t+1}$ to its parent node, and acts as the identity mapping, otherwise.
If $(a_i,a_{i+1})$ is an arc in $G_{t+1}$, then $(f(a_i), f(a_{i+1}))$ is an arc in $G_t$. The subgraph induced by the
edges of the closed directed walk $(f(a_1),\ldots, f(a_k))$ has
in-degree equal to out-degree at every node (counting the
multiplicity of the edges in the walk) precisely because it is a
closed directed walk.  Each time we visit a node on the walk we
contribute one to the in-degree and one to the out-degree. Hence, $(f(a_1),\ldots, f(a_k))$ decomposes into an edge-disjoint collection of directed cycles (none of which is a directed 2-cycle, by hypothesis). Hence, $G_t$ contains a cycle. \qed
\end{proof}

We turn next to directed Hamiltonian cycles; that is, directed cycles visiting each node exactly once. Note that while we do not expect directed Hamiltonian cycles in real-world social networks, the emergence of this property in ILDT graphs is of graph-theoretical interest in its own right. For this, we first prove the following theorem on Hamiltonian paths in ILT undirected graphs. For a graph
$G$, we use the notation $\ILT_t(G)$ for the ILT graph resulting at time $t$ if $G_0=G;$ analogous notation is used for ILDT graphs. We use the notation $G[S]$ for the subgraph induced by nodes $S$ in $G$.

In the following lemma with $G_0$ chosen as $K_1,$ we label the node of $G_0$ as $0,$ and its unique child in $G_1$ as $1.$
\begin{lemma}
Fix $t\ge 1$ and let $G_t=\ILT_t(K_1)$.  For every clone $v\in V(G_t)$, there is a Hamiltonian path in $G_t$ from $v$ to 0.
\label{path-lem}
\end{lemma}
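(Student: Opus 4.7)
The plan is to prove by induction on $t$ the stronger statement that for every $y \in V(G_{t-1})$ there is a Hamiltonian path in $G_t$ from the step-$t$ clone $y'$ to $0$, and whenever $y \ne 0$ this path can be chosen to begin with the parent--clone edge $y'y$. The lemma then follows because every clone of $V(G_t)$ is a step-$t$ clone $y'$ of some $y \in V(G_{t-1})$. The base case $t=1$ is the Hamiltonian path $1,0$.

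For the inductive step, first I record the easy fact (a quick induction on $s$) that $0$ is a universal vertex of each $G_s$; in particular $0'$ is adjacent in $G_t$ to every non-clone. Fix $y \in V(G_{t-1})$. If $y = 0$, apply the inductive hypothesis to obtain any Hamiltonian path $w_1,\ldots,w_{n_{t-1}}=0$ in $G_{t-1}$ and lift it to
\[
0',\,w_1,\,w_1',\,w_2,\,w_2',\,\ldots,\,w_{n_{t-1}-1}',\,0;
\]
the initial edge uses universality of $0'$, while the remaining edges are either parent--clone edges or are inherited from consecutive vertices of the $G_{t-1}$-path. If $y$ is itself a step-$(t-1)$ clone, the inductive hypothesis directly supplies a Hamiltonian path $y, u_2, \ldots, u_{n_{t-1}}=0$ in $G_{t-1}$, and the standard doubling $y', y, u_2', u_2, \ldots, u_{n_{t-1}}', 0$ is a Hamiltonian path of $G_t$ of the required form.

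The main obstacle is the remaining case $y \in V(G_{t-2}) \setminus \{0\}$, in which a naive doubling produces a path starting at the wrong clone. Here I would invoke the strengthened hypothesis at level $t-1$ to obtain a Hamiltonian path $\tilde y,\, y,\, p_2,\, p_3,\, \ldots,\, p_{n_{t-1}-2},\, 0$ in $G_{t-1}$, where $\tilde y$ denotes the step-$(t-1)$ clone of $y$. From it I would build the Hamiltonian path
\[
y',\ y,\ p_2',\ p_2,\ \ldots,\ p_{n_{t-1}-2}',\ p_{n_{t-1}-2},\ 0',\ \tilde y,\ \tilde y',\ 0
\]
in $G_t$: the opening doubles the $G_{t-1}$-segment from $y$ through $p_{n_{t-1}-2}$, and the closing four-vertex detour $0', \tilde y, \tilde y', 0$ picks up the two remaining vertices before ending at $0$. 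The non-standard edges $p_{n_{t-1}-2}\,0'$ and $0'\,\tilde y$ both use the universality of $0'$ among non-clones, $\tilde y\,\tilde y'$ is a parent--clone edge, and $\tilde y'\,0$ is valid because $y\sim 0$ in $G_{t-2}$ forces $\tilde y'\sim 0$ in $G_t$. A routine count confirms that each vertex of $G_t$ is visited exactly once and that the second vertex is $y$, closing the induction.
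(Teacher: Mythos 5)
Your proposal is correct, but it follows a genuinely different route from the paper's proof. The paper partitions $V(G_t)$ into $V_0$ and $V_1$ (the descendants of $0$ and of $1$), so that $G_t[V_0]\cong G_t[V_1]\cong G_{t-1}$, and then splices together a Hamiltonian path of each half, supplied by the induction hypothesis, using the cross-edges guaranteed by the facts that $1$ is joined to every vertex of $V_0$ and $0$ to every vertex of $V_1$; two short cases (depending on whether $v\in V_1$ or $v\in V_0$) complete the induction. You instead induct along a single cloning step, viewing $G_t$ as $G_{t-1}$ together with its step-$t$ clones: you strengthen the induction hypothesis so that the path may be required to start along the parent--clone edge $y'y$ whenever $y\neq 0$, you lift a Hamiltonian path of $G_{t-1}$ by interleaving parents with their clones, and in the hardest case ($y\in V(G_{t-2})\setminus\{0\}$) you use the universality of $0$ in every $G_s$ (hence the adjacency of $0'$ to all non-clones, and of $\tilde y'$ to $0$) to absorb the leftover vertices $0',\tilde y,\tilde y'$ just before terminating at $0$. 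I checked the three cases and the edge justifications; they are exhaustive and sound, and the vertex counts work out. The trade-off: the paper's two-copy decomposition is shorter, more symmetric, and matches the descendant-based decomposition reused in Theorem~\ref{mainthm}, whereas your argument produces fully explicit vertex sequences, proves a slightly stronger statement (control of the second vertex of the path), and makes the role of the universal vertex $0$ explicit, at the cost of a three-case analysis and the bookkeeping of the strengthened hypothesis.
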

\begin{proof}
We use induction on $t \ge 1$.  The base case $t=1$ is straightforward, since $G_1\cong K_2$.
For the induction step, we assume $t\ge 2$. We label the node of $G_0$ as $0,$ and its unique child in $G_1$ as $1.$  Note that we can partition $V(G_t)$
into $V_0$ and $V_1$ such that $0\in V_0$, $1\in V_1$ and $G_t[V_0]\cong
G_t[V_1]\cong G_{t-1}$. To see this, consider the ILDT process as starting with each of the nodes $0$ and $1$ independently. For $i=1,2$, the set $V_i$ consists of all clones over subsequent time-steps starting with the initial vertex $i$.

First suppose that $v\in V_1$, and choose an arbitrary $w\in V_0$. By the
induction hypothesis, there exists a Hamiltonian path $P_0$ in $G_t[V_0]$ with
endpoints $w$ and $0$.  Similarly, there exists a Hamiltonian path $P_1$ in
$G_t[V_1]$ with endpoints $v$ and $1$. Let $P=P_1+1w+P_0$; now $P$ is the
desired Hamiltonian path in $G_t$ from $v$ to 0.

Suppose instead that $v\in V_0$, and choose an arbitrary $w\in V_1$. By the
induction hypothesis, there exists a Hamiltonian path $P_0$ in $G_t[V_0]$ with
endpoints $v$ and $0$. Similarly, there exists a Hamiltonian path $P_1$ in
$G_t[V_1]$ with endpoints $w$ and $1$. Let $x$ be the neighbor of 0 on $P_0$.
Let $P=P_0-x0+x1+P_1+w0$.  Now $P$ is the desired Hamiltonian path in $G_t$ from
$v$ to 0. \qed
\end{proof}

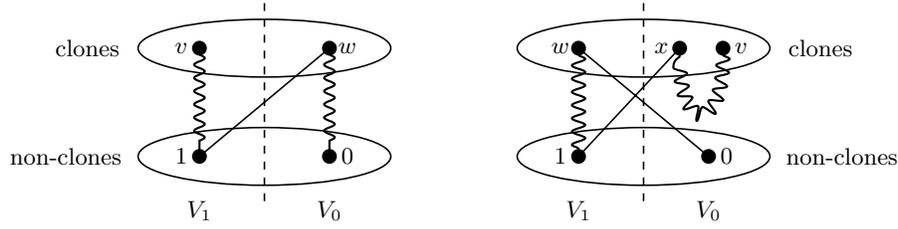
\begin{figure}[!h]
\centering
\begin{tikzpicture}[scale=.48,semithick]
\tikzstyle{usStyle}=[shape = circle, minimum size = 5.0pt, inner sep = 0pt,
outer sep = 0pt, draw, fill=black]

\draw (0,0) ellipse (3.5cm and .8cm); 
\draw (-4.9,0) node {\footnotesize{clones}};
\draw (0,-3) ellipse (3.5cm and .8cm); 
\draw (-5.5,-3) node {\footnotesize{non-clones}};
\draw (-1.8,0) node[usStyle] (v) {} ++ (-.5,0) node (vlab) {\footnotesize{$v$}};
\draw (1.8,0) node[usStyle] (w) {} ++ (.5,0) node (wlab) {\footnotesize{$w$}};
\draw (-1.8,-3) node[usStyle] (v1) {} ++ (-.5,0) node (v1lab) {\footnotesize{$1$}};
\draw (1.8,-3) node[usStyle] (v0) {} ++ (.5,0) node (v0lab) {\footnotesize{$0$}};

\draw[thick,snake=coil,segment aspect=0, segment amplitude=2pt,segment
length=5pt] (v) -- (v1) (w) -- (v0);
\draw (w) -- (v1);

\draw[dashed] (0,1.25) -- (0,-4.25); 
\draw (-1.8,-4.5) node (V1) {\footnotesize{$V_1$}};
\draw (1.8,-4.5) node (V0) {\footnotesize{$V_0$}};

\begin{scope}[xshift=10.5cm]
\draw (0,0) ellipse (3.5cm and .8cm); 
\draw (4.9,0) node {\footnotesize{clones}};
\draw (0,-3) ellipse (3.5cm and .8cm); 
\draw (5.5,-3) node {\footnotesize{non-clones}};
\draw (-1.8,0) node[usStyle] (w) {} ++ (-.5,0) node (wlab) {\footnotesize{$w$}};
\draw (2.2,0) node[usStyle] (v) {} ++ (.5,0) node (vlab) {\footnotesize{$v$}};
\draw (1.0,0) node[usStyle] (x) {} ++ (-.5,0) node (xlab) {\footnotesize{$x$}};
\draw (-1.8,-3) node[usStyle] (v1) {} ++ (-.5,0) node (v1lab) {\footnotesize{$1$}};
\draw (1.8,-3) node[usStyle] (v0) {} ++ (.5,0) node (v0lab) {\footnotesize{$0$}};

\draw[thick,decorate, decoration=snake, segment length=5pt]
(v) .. controls ++(270:2.25) and ++(270: 2.25) ..  (x);
\draw[thick,decorate, decoration=snake, segment length=5pt] (w) -- (v1);
\draw (x) -- (v1) (w) -- (v0);

\draw[dashed] (0,1.25) -- (0,-4.25); 
\draw (-1.8,-4.5) node (V1) {\footnotesize{$V_1$}};
\draw (1.8,-4.5) node (V0) {\footnotesize{$V_0$}};

\end{scope}
\end{tikzpicture}
\caption{The two cases in the proof of Lemma~\ref{path-lem}, wavy lines represent paths.
\label{path-lem-fig}}
\end{figure}

In a digraph $D$, a closed spanning walk $\CC$ (that respects the orientations
of $D$) is \emph{nice} if for each edge
$v_iv_j\in E(\CC)$ either (i) $v_iv_j$ is the last edge departing $v_i$ on $\CC$ or
(ii) $v_iv_j$ is the first edge entering $v_j$ on $\CC$; possibly both (i) and (ii)
hold for some edges of $\CC$. The \emph{max frequency}, written
$s(\CC)$, of a nice walk $\CC$ is the largest number of times that any node
appears in $\CC$.

\begin{theorem}
If $D$ is a digraph with a nice walk $\CC$ and $t\in\Z^+$ such that $2^{t-1}\ge
s(\CC)$, then $D_t$ has a directed Hamiltonian cycle.
\label{mainthm}
\end{theorem}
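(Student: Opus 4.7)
My plan is to construct a directed Hamiltonian cycle of $D_t$ by expanding the nice walk $\CC$ cloud-by-cloud, treating $t=1$ as a base case. When $t=1$, the hypothesis $2^{t-1}\ge s(\CC)$ forces $s(\CC)=1$, so $\CC$ is itself a directed Hamiltonian cycle $u_1\to u_2\to\cdots\to u_n\to u_1$ of $D$, and the sequence $u_1\to u_1'\to u_2\to u_2'\to\cdots\to u_n\to u_n'\to u_1$ is a directed Hamiltonian cycle of $D_1$: each $(u_i,u_i')$ is the bidirectional arc added by cloning, and each $(u_i',u_{i+1})$ is produced in $D_1$ from $(u_i,u_{i+1})\in D$ by the ILDT cloning rule. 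This same lifting more generally shows that a Hamiltonian cycle in $D_i$ always produces one in $D_{i+1}$, which I will keep as a convenient backup.

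For $t\ge 2$, I would assemble the Hamiltonian cycle by traversing $\CC$ and expanding each visit into a directed path of $D_t$. For each $v\in V(D)$, the descendant cloud $V_v\subseteq V(D_t)$ has $2^t$ vertices and, through the bidirectional arcs alone, is isomorphic to $\ILT_t(K_1)$. Since $2^t\ge 2k_v$, where $k_v\le s(\CC)$ is the number of visits that $\CC$ pays to $v$, my plan is to partition $V_v$ into $k_v$ subsets of size at least $2$ and to traverse each subset as a directed path that visits each of its vertices exactly once. The decomposition of $V_v$ into its $2^{t-1}$ vertices present at time $t-1$ and their $2^{t-1}$ clones added at time $t$, with every parent bidirectionally matched to its clone, lets such a subset be built as a union of matched pairs and traversed by alternating parent-to-clone moves. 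Between clouds, the arcs of $D_t$ arising from an arc $(v,w)\in E(\CC)$ connect $V_v\times V_w$ at every pair except the clone-to-clone pairs, leaving enough freedom to link the exit vertex of one block to the entry vertex of the next.

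The main obstacle is proving the required path-partition extension of Lemma~\ref{path-lem}: for every $k\le 2^{t-1}$, $\ILT_t(K_1)$ decomposes into $k$ vertex-disjoint paths whose endpoints can be prescribed flexibly enough to satisfy the entry/exit constraints coming from the arcs of $\CC$. I expect this extension to follow by induction on $t$ modelled on the proof of Lemma~\ref{path-lem}, using the time-$t$ split to double the number of admissible paths at each stage and the bidirectional matched pairs as glue across the split. The niceness of $\CC$ will be the crucial structural hook making the endpoint assignment globally consistent around the cyclic walk: the labels ``last departure from its tail'' and ``first arrival at its head'' on each edge of $\CC$ determine a canonical ordering of the visits at each vertex, and propagating this ordering through the recursive partition should align the endpoints of consecutive blocks with the arcs of $D_t$ induced by $\CC$.
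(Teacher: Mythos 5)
Your base case $t=1$ is fine, and your overall strategy (expand the nice walk cloud-by-cloud, with each descendant cloud $V_v$ inducing a bidirectional copy of $\ILT_t(K_1)$) is the same as the paper's. But there is a genuine gap, and it sits exactly where the theorem is hard. First, your structural claim about cross-cloud arcs is false for $t\ge 2$: an arc $(v,w)$ of $D$ does \emph{not} give all pairs of $V_v\times V_w$ except clone-to-clone pairs. The number of arcs between the two clouds triples at each step, so there are only $3^t$ arcs among the $4^t$ ordered pairs. Concretely, at $t=2$, if $v_1,w_1$ denote the time-$1$ clones of $v,w$, then $(v_1,w_1)$ is not an arc of $D_2$ even though neither endpoint is a time-$2$ clone. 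The only vertex of $V_v$ with arcs to \emph{every} vertex of $V_w$ is the original node $v$ itself, and the only vertices of $V_w$ reachable from an arbitrary non-clone of $V_v$ are certain fresh (time-$t$) clones. So the ``freedom to link the exit vertex of one block to the entry vertex of the next'' that your plan relies on is far more constrained than you assume, and this is precisely why niceness is needed: it guarantees each crossing edge of $\CC$ is either the last departure from its tail (so you can arrange to exit from the original node, which dominates the next cloud) or the first arrival at its head (so you may enter at whichever unvisited clone happens to be reachable and start that cloud's internal path there).

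Second, the statement you yourself identify as ``the main obstacle''---a partition of $\ILT_t(K_1)$ into $k\le 2^{t-1}$ vertex-disjoint paths with prescribable endpoints compatible with the entry/exit constraints---is left entirely unproved, and with the corrected (much sparser) cross-cloud adjacency it is not clear which endpoint prescriptions are even admissible; the vague appeal to a ``canonical ordering of visits'' induced by niceness does not specify this. The paper avoids the path-partition problem altogether: it uses a single Hamiltonian path $P_i$ in each cloud from a clone to the original node (Lemma~\ref{path-lem}), enters each cloud at a previously unvisited clone (possible since there are $2^{t-1}\ge s(\CC)$ clones per cloud), advances exactly one step along $P_i$ on each non-final visit, and on the final departure runs the remainder of $P_i$ down to the original node before crossing. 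Since clones form an independent set, $P_i$ alternates clone/non-clone, which makes the single-step bookkeeping consistent. Either you adopt that mechanism, or you must actually prove your path-partition lemma together with a correct description of which cross-cloud arcs exist; as written, the argument does not go through.
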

\begin{proof}
Let $D_t=\mathrm{ILDT}_t(D)$. We construct a Hamiltonian cycle in $D_t$ by the algorithm below.  We assume that
the nodes of $D$ are $\{v_1,\ldots,v_n\}$ (each $v_i$ may appear many times on
$\CC$) and that the nodes of $D_t$
are partitioned into $V_1,\ldots, V_n$ (where $V_i$ consists of $v_i$ and all
its \emph{descendants}; that is, nodes that resulted by iterated cloning of $v_i$).
Intuitively, we use $\CC$ to ensure that our walk $\HH$ visits each $V_i$ at
least once and we use each $P_i$ to ensure that we visit all remaining vertices of
$V_i$ the last time that $\CC$ visits $v_i$, in $D$.

\noindent \textbf{Initialization:} Pick an arbitrary node $v_i$ on $\CC.$
Choose a clone $w\in V_i$.  Start $\HH$ at $w$.
Let $P_i$ be a Hamiltonian path in $V_i$ that starts at $w$ and ends at 0 (in
$V_i$), by Lemma~\ref{path-lem} (with vertex 0 defined as in
Lemma~\ref{path-lem}).  Always $v_i, v_j$ refers to vertices of $D$ and
vertices of $D_t$ are denoted by $w$, $x$, or 0.

\noindent \textbf{Iteration:}
Assume that $\HH$ currently ends at some clone $w\in V_i$ (for some $i$) and
path $P_i$ is defined, possibly from the initialization.
If we have followed all edges of $\CC$, then halt and output $\HH$.  Otherwise,
let $e=v_iv_j$ denote the next edge of $\CC$.
(1) If $e$ is the last edge leaving $v_i$ on $\CC$, then follow $P_i$ from
$w$ to $0$ in $V_i$; otherwise, move to the next node on $P_i$. (2)
If $P_j$ is undefined (we have not yet visited $v_j$ on $\CC$),
then follow an edge to an arbitrary clone $x$ in $V_j$.
In this case, define $P_j$ to be a Hamiltonian path in $V_j$ with endpoints $x$
and 0; such a $P_j$ exists by Lemma~\ref{path-lem}.
If $P_j$ is defined, then follow an edge from the current node of $\HH$ to the next
node on $P_j$ (in $V_j$). (When $e$ is the final edge of $\CC$, we return to
the node of $V_j$ where we started, which finishes $\HH$.)

This completes the algorithm for constructing $\HH$ from $\CC$.
We prove its correctness in two steps.  First, we show that if the
algorithm completes, then it constructs the desired Hamiltonian cycle $\HH$.
Second, we show that the algorithm does indeed complete. Suppose the algorithm completes successfully.
Since $\CC$ is a spanning walk, it visits every node $v_i\in V(D)$. Thus,
$\HH$ visits every $V_i$.  The final time that $\HH$ visits a $V_i$ it visits
every remaining node on $P_i$.  Thus, $\HH$ visits every node in $\bigcup
V_i=V(D_t)$; that is, $\HH$ is spanning in $D_t$.

Now we must show that the algorithm completes successfully. Every time $\HH$
leaves a $V_i$ it does so from a non-clone, and every time $\HH$ returns
to a $V_j$ it returns to a clone (that has not been visited before).  The number
of clones in each $V_j$ is $2^{t-1}$, so this is possible precisely because
$s(\CC)\ge 2^{t-1}$.  Now we need to check that $D_t$ has the necessary edges
between $V_i$ and $V_j$.  Since $\CC$ is nice, each edge $e=v_iv_j\in E(\CC)$
satisfies that either (i) $v_iv_j$ is the last edge leaving $v_i$ on $\CC$ or
(ii) $v_iv_j$ is the first edge entering $v_j$ on $\CC$. In (i), $\HH$ leaves $V_i$
from node 0, which has edges to every node of $V_j$. In (ii), any edge from
a non-clone of $V_i$ to a clone of $V_j$ suffices, since we will define $P_j$ as
starting from this clone of $V_j$  (and since $\CC$ has never before visited
$v_j$).
\qed \end{proof}

The following result on the Hamiltonicity of the ILT model was first proven in \cite{ilm}, and we give an alternative proof as a corollary of Theorem~\ref{mainthm}.

\begin{corollary}
If $G$ is a connected undirected graph and $t=\log |V(G)|$, then $\ILT_t(G)$ is Hamiltonian.
\end{corollary}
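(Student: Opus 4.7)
The plan is to deduce the corollary by applying Theorem~\ref{mainthm} to $G$ viewed as the symmetric digraph $\tilde G$, in which each undirected edge of $G$ is replaced by a pair of bidirected arcs. Under this identification the ILT step on $G$ agrees with the ILDT step on $\tilde G$, so a directed Hamiltonian cycle in $\mathrm{ILDT}_t(\tilde G)$ is an undirected Hamiltonian cycle in $\ILT_t(G)$. Hence it suffices to exhibit a nice closed spanning walk $\CC$ of $\tilde G$ with $s(\CC)\le 2^{t-1}$ and then invoke Theorem~\ref{mainthm}.

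Since $G$ is connected, I would fix a spanning tree $T$ of $G$, root it at some vertex $r$, and let $\CC$ be the closed walk obtained from a canonical depth-first traversal of $T$ starting at $r$. This walk has length $2(|V(G)|-1)$, spans $V(G)$, and lies in $\tilde G$ because each tree edge corresponds to a bidirected pair of arcs. To verify that $\CC$ is nice, I would argue arc-by-arc: each descending arc (from a parent $u$ to a child $v$) is the first time $\CC$ enters $v$, so condition (ii) of the definition of niceness holds; each ascending arc (from a child $v$ back to its parent) is the last time $\CC$ departs from $v$, because the DFS finishes the entire subtree rooted at $v$ before backtracking, so condition (i) holds.

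The max frequency of $\CC$ is then controlled by the spanning tree: each non-root vertex $v$ appears in $\CC$ exactly $\deg_T(v)$ times, so $s(\CC)\le \Delta(T)$. The main obstacle is matching this bound against $2^{t-1}$ with $t=\log|V(G)|$, since in general $\Delta(T)$ can be as large as $|V(G)|-1$ (e.g.\ in a star); the argument therefore requires choosing $T$ to minimize $\Delta(T)$ or, in extremal cases, sharpening the counting near high-degree vertices of the closed walk. Once a nice walk with $s(\CC)\le 2^{t-1}$ is secured, Theorem~\ref{mainthm} applied to $\tilde G$ produces a directed Hamiltonian cycle in $\mathrm{ILDT}_t(\tilde G)=\ILT_t(G)$, completing the proof.
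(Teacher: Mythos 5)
Your construction is essentially the paper's own proof: you pass to the symmetric digraph, take the closed spanning walk given by an Euler tour of a rooted (DFS) spanning tree, and verify niceness exactly as the paper does (arcs descending to a not-yet-visited vertex satisfy (ii), backtracking arcs satisfy (i)). The genuine gap is the one you flag yourself and then leave open: the hypothesis $2^{t-1}\ge s(\CC)$ of Theorem~\ref{mainthm} is never established. With $t=\log|V(G)|$ we have $2^{t-1}=|V(G)|/2$, while in your walk a vertex $v$ appears about $\deg_T(v)$ times, and neither of your proposed repairs can work in general: for the star $K_{1,n-1}$ every spanning tree is the star itself, and in fact \emph{every} closed spanning walk must visit the center at least $n-1>n/2$ times, so no choice of $T$ and no ``sharpening of the counting near high-degree vertices'' yields a nice walk with $s(\CC)\le 2^{t-1}$ at $t=\log n$. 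As written, then, the proposal does not prove the corollary; within this framework one only obtains Hamiltonicity once $2^{t-1}$ exceeds the unavoidable maximum frequency of a closed spanning walk, which for graphs with a dominating high-degree vertex forces a larger $t$ than the stated $\log|V(G)|$.

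To be fair, you have put your finger on a step that the paper's own one-paragraph proof silently skips: it builds the same DFS-based walk, checks niceness, and applies Theorem~\ref{mainthm} without ever verifying the max-frequency hypothesis, so the star example is equally problematic for that argument as literally written (the corollary itself is attributed to \cite{ilm}, where it was first proved by other means). But a proof attempt that explicitly records the missing verification as ``the main obstacle'' and offers only fixes that demonstrably fail on the star is an incomplete proof rather than a correct one; to close it you would need either a strengthened counting argument in Theorem~\ref{mainthm} itself or a different construction of the spanning walk.
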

\begin{proof}
We form a digraph $D$ from $G$ by replacing each undirected edge $vw$ with arcs
$(v,w)$ and $(w,v)$. We construct a nice spanning closed walk of $D$ and apply
Theorem~\ref{mainthm}. Choose an arbitrary node $v\in D$ and form $\CC$
by recording each edge followed in a depth-first traversal of $D$ (including to what we call
back-tracking edges). Consider an edge $vw\in E(\CC)$.  If $w$ has never been
visited before, then $vw$ satisfies (ii) in the definition of nice.  If
$w$ has been visited before, then it is straightforward to check that $vw$ satisfies (i)
in the definition (precisely because $\CC$ arose from a depth-first traversal of
$D$). \qed \end{proof}

We conjecture that for every strongly connected digraph $D$ there exists an integer $t$ such that
$\ILT_t(D)$ has a Hamiltonian cycle. In a sense, this conjecture is best possible, since if $\ILT_t(D)$ is Hamiltonian for some $t$, then $D$ must
be strongly connected.  Namely, if there exist $v_i,v_j\in V(D)$ such that $D$
has no directed path from $v_i$ to $v_j$, then $\ILT_t(D)$ has no directed path
from $V_i$ to $V_j$, so $\ILT_t(D)$ is not Hamiltonian.  We suspect that this
conjecture can be proved by somehow modifying the proof of
Theorem~\ref{mainthm}.

\section{Conclusion and further directions}

We introduced and analyzed the Iterated Local Directed Transitivity (ILDT) model for social networks, motivated by status theory, transitivity in triads, and the ILT model in the undirected case \cite{ilt}. We proved that the ILDT model, as in social networks, generates graphs that densify over time. A count of the directed, transitive, and bidirectional 3-cycles was given, and it was shown that the 3-transitive cycles count may be far more abundant by choice of the initial graph of the model. We studied the eigenvalues of the adjacency matrices of ILDT graphs, with a discussion of the limiting distribution of eigenvalues of the directed 3-cycle. We concluded our results with an analysis of directed cycles in ILDT graphs and proved that in many instances of the initial graph, ILDT graphs have Hamiltonian cycles.

Given our limited space, we did not explore distance properties of the model, although we expect the model should generate small-world graphs, as is the case for ILT graphs. In the full version of the paper, it would be interesting to analyze the clustering coefficient, domination number, and degree distribution of ILDT graphs. The eigenvalues of ILDT graphs are worthy of further study, both in their limiting distribution in the complex plane and regarding their spectral expansion.

\end{document}